\begin{document}

\title{\mbox{\huge {Capacity of a Class of Modulo-Sum Relay Channels}}}

\author{\authorblockN{Marko Aleksic, Peyman Razaghi, and Wei Yu}
\authorblockA{Department of Electrical and Computer Engineering,
University of Toronto, Canada\\
\text {e-mails:\{aleksicm,peyman,weiyu\}@comm.utoronto.ca}} }



%


\maketitle

\begin{abstract}
   This paper characterizes the capacity of a class of modulo additive noise relay channels, in which the relay observes a corrupted version of the noise and has a separate channel to the destination. The capacity is shown to be strictly below the cut-set bound in general and achievable using a quantize-and-forward strategy at the relay. This result confirms a conjecture by Ahlswede and Han about the capacity of channels with rate limited state information at the destination for this particular class of channels.
\end{abstract}


%
\IEEEpeerreviewmaketitle

\section{Introduction}

The relay channel is a fundamental building block in network
information theory. Complete characterization of the relay channel
capacity would be a first step toward finding the capacities of
larger networks. Although the capacity of the general relay channel
is not yet known, the capacities of many specific classes of relay
channels have been found. These special classes include the
degraded, reversely degraded \cite{covergamal}, orthogonal
\cite{zahedi}, semideterministic \cite{aref}, and recently a special
class of deterministic\cite{coverkim} relay channels. All the above
relay channels for which capacities are characterized have one thing
in common: they achieve their respective cut-set bounds. This makes
converses straightforward. Unfortunately it appears that the cut-set
bound cannot be achieved for many practical relay channels. Efforts
to find different bounds, or prove the looseness of the cut-set
bound have proved to be quite difficult. Zhang's partial
converse\cite{zhang} demonstrated the latter; Zahedi \cite{zahedi}
provided some justifications for why the cut-set bound cannot be tight in all cases. \\
 \indent In this paper we find the capacity for a non-trivial class of modulo-sum relay channels. In these channels, the relay observes a correlated version of the noise between the source and the destination, and has a dedicated channel to the destination. We show that the capacity can be strictly below the cut-set bound, and is achievable by a quantize-and-forward strategy \cite[Theorem 6]{covergamal}. The quantize-and-forward strategy was previously only known to achieve the cut-set bound capacity of one class of deterministic relay channels \cite{coverkim}. The modulo-sum relay channel appears to be a first example of a channel where this strategy achieves a capacity strictly below the cut-set bound.\\ \indent The quantize-and-forward strategy was designed for use in channels where the relay has a poor quality channel from the source. In this strategy the relay quantizes its received signal, and transmits the quantized signal to the destination. The destination first decodes the quantized signal from the relay, then uses this signal to help decode the source message. The destination may also use its own received signal to help the decoding of the quantized signal from the relay, because the two signals may be correlated in a general relay channel. This technique is known as Wyner-Ziv coding. Quantize-and-forward is a natural strategy for the modulo channel considered in this paper where the relay observes only the noise. This is because there is no message for the relay to decode; all the relay can do is to describe the noise to the destination.\\
  \indent The converse result contained in this paper crucially depends on two properties of modulo-sum channels. In these channels a uniform distribution on the input alphabet achieves the maximum possible entropy of the output, regardless of the statistics of the additive noise. Further, under a uniform input distribution, the output of a modulo-sum channel is also independent of the additive noise. This has the consequence of simplifying the converse: the side information in Wyner-Ziv coding is not useful since the destination's observation is independent of the relay's output.\\
   \indent A relay channel where the relay only gets to observe some possibly stochastic function of the noise and has a dedicated finite capacity channel to the destination can be viewed as a channel with rate limited state information available to the destination. The capacity result for modulo-sum relay channels coincides with a hypothesis by Ahlswede and Han \cite{ahlswede} about the capacity of channels with rate limited state information to the destination.

\section{A Binary Symmetric Relay Channel}\label{bin}

We begin by deriving the capacity of a particular binary symmetric relay channel. The derivation will be directly applicable to a broader class of modulo-sum relay channels. The simple binary symmetric case is used to distil the essential steps and ideas.\\
\indent Consider the relay channel as shown in Fig. \ref{second}.
Here, the channel input $X$ goes through a binary symmetric channel
(BSC) with crossover probability $p$ to reach $Y$, i.e., $Y=X+Z$
(mod 2) with $Z$ being an i.i.d. $Ber(p)$ random variable. The relay
gets to observe a noisy version of $Z$, namely $Y_{1}=Z+V$, where
$V$ is an i.i.d. $Ber(\delta)$ random variable. The relay also has a
separate BSC to the destination $S=X_{1}+N$, where $N$ is an i.i.d.
$Ber(\epsilon)$ random variable.
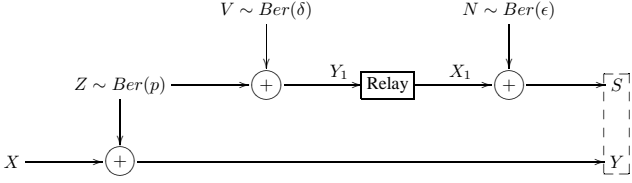
\begin{figure}[h]
\begin{displaymath}
\scalebox{0.68}{\xymatrix{&& V \sim Ber(\delta)\ar[d] && N \sim
Ber(\epsilon)\ar[d]\\
& Z \sim Ber(p)\ar[r]\ar[d]&*++[o][F-]{+}\ar[r]^(.6){\displaystyle
Y_1}&
*+[F]{\txt{Relay}}\ar[r]^(.6){ \displaystyle X_{1} }&*++[o][F-]{+}\ar[r]&{S}\\
X\ar[r]&*++[o][F-]{+}\ar[rrrr]&&&&{Y} \save "2,6"."3,6"*[F--]\frm{}
\restore}}
\end {displaymath}
\caption{A Binary Relay Channel}\label{second}
\end{figure}

 Let us define
\begin{align}
R_{0}=\mathop {\max}\limits_{p(x_{1})} \ I(X_{1};S),\label{eq1}
\end{align}
 for future reference. If there were no corrupting variable $V$, then the capacity of this channel is as recently characterized in \cite{coverkim}
 \begin{align}
 C=\mathop{\max}\limits_{p(x)}\min\{I(X;Y)+ R_{0},I(X;Y,Y_{1})\}.
 \end{align}
 Both hash-and-forward \cite{coverkim}, a strategy where the relay simply hashes $Y_{1}$ into equal sized bins, and the classic quantize-and-forward are capacity achieving. The multiple access cut-set bound is $I(X;Y)+R_{0}$. This bound is obtained by considering the achievable rate assuming that the relay already knew the message the source would like to transmit. One way to interpret the achievability of the multiple access cut-set bound is that if $V$ were absent, decoding $X$ is the same as decoding $Z$. So, the relay, by sending parity information about $Z$, can be interpreted to be performing a version of decode-and-forward, as if it already knows the message; random parities for $Z$ turn into random parities for $X$. This
  interpretation would fail if the relay's observation of $Z$ is corrupted by $V$. To the best of the authors' knowledge, the capacity of this class of relay channels when $V$ is present has not been characterized previously. \\
 \indent The following is a reasonable strategy for this channel. The relay tries to quantize $Y_{1}$ in such a way as to minimize the uncertainty about $Z$ at the destination. The main result of this paper is that the above approach is capacity achieving for a class of modulo-sum relay channels including the channel in Fig. \ref{second}.\\

\newtheorem{theorem}{Theorem}
\begin{theorem}
   The capacity $C$ of the binary relay channel in Fig. \ref{second} is
   \begin{align}
     C = \max_{p(u|y_1): I(U; Y_1) \le R_0} 1 - H(Z|U) \label{cap1}
   \end{align}
   where the maximization may be restricted to $U$'s with $|\mathcal{U}|\leq |\mathcal{Y}_{1}|+2$, and $R_{0}$ is as
    defined in (\ref{eq1}).\\
\end{theorem}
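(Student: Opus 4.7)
The plan is to prove achievability with a Wyner--Ziv quantize-and-forward strategy and to establish the converse by single-letterizing an $n$-letter mutual information using an auxiliary variable built from the destination's observation of the relay's codeword.

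For achievability, fix any $p(u|y_1)$ with $I(U;Y_1)\le R_0$ and use iid uniform $X^n$. The relay quantizes $Y_1^n$ by selecting a codeword $U^n$ jointly typical with $Y_1^n$ and transmits its bin index over the relay--destination BSC (whose capacity is $R_0$). The modulo-sum structure plays a crucial role here: under uniform $X$, $Y$ is independent of $Z$ and hence of $Y_1$ and $U$, so the destination's side information $Y^n$ is useless for recovering $U^n$ and the standard Wyner--Ziv constraint $I(U;Y_1|Y)\le R_0$ collapses to $I(U;Y_1)\le R_0$. Once $U^n$ is decoded, a short computation using the independence of $Y$ and $U$ yields $I(X;Y,U)=1-H(Z|U)$, which is the claimed rate.

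For the converse, start with Fano: $nR\le I(X^n;Y^n,S^n)+n\epsilon_n$. Splitting this into $I(X^n;Y^n)+I(X^n;S^n|Y^n)$ and using $Y^n=X^n+Z^n$, $X^n\perp(V^n,N^n)$, and $H(Y^n)\le n$ gives $I(X^n;Y^n)\le n-nH(Z)$ and $I(X^n;S^n|Y^n)=I(Z^n;S^n)-I(Y^n;S^n)\le I(Z^n;S^n)$, so $nR\le n-nH(Z)+I(Z^n;S^n)+n\epsilon_n$. It remains to show that $\tfrac{1}{n}I(Z^n;S^n)\le I(Z;U)$ for some $U$ with $I(U;Y_1)\le R_0$ and $U-Y_1-Z$ Markov. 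I would define $U_i=(S^n,Y_1^{i-1})$ and set $U=(J,U_J)$ with $J$ independent and uniform on $\{1,\dots,n\}$. Three items need verification: (a) the Markov chain $U-Y_1-Z$, which reduces to $(S^n,Y_1^{i-1})\perp Z_i|Y_{1i}$ and follows because given $Y_{1i}$ the noise $V_i$ is independent of all variables outside time $i$; (b) the rate bound $I(U;Y_1)=\tfrac{1}{n}\sum_i I(S^n;Y_{1i}|Y_1^{i-1})=\tfrac{1}{n}I(S^n;Y_1^n)\le R_0$, using $Y_1^{i-1}\perp Y_{1i}$ together with the data-processing bound $I(S^n;Y_1^n)\le I(S^n;X_1^n)\le nR_0$ through the relay--destination BSC; and (c) $I(U;Z)\ge \tfrac{1}{n}I(Z^n;S^n)$, which is the heart of the proof.

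The main obstacle is (c), which per index is equivalent to $H(Z_i|S^n,Y_1^{i-1})\le H(Z_i|S^n,Z^{i-1})$. This is not obvious, because $Y_1^{i-1}$ and $Z^{i-1}$ are incomparable as conditioning information (neither is a function of the other). The decisive observation is that, given $Y_1^{i-1}$, the past $Z^{i-1}$ is in bijection with the noise $V^{i-1}$, and $V^{i-1}$ is iid noise independent of every variable at times $\ge i$. In particular $Z^{i-1}\perp(Z_i,S^n)|Y_1^{i-1}$, and so $I(Z_i;Z^{i-1}|S^n,Y_1^{i-1})=0$. Expanding $I(Z_i;Z^{i-1},Y_1^{i-1}|S^n)$ by the chain rule in two ways then forces $I(Z_i;Y_1^{i-1}|S^n)\ge I(Z_i;Z^{i-1}|S^n)$, which after subtracting each side from $H(Z)$ is exactly the per-index inequality required. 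Summing over $i$, combining with (a) and (b), and letting $n\to\infty$ yields $R\le 1-H(Z|U)$ for a $U$ satisfying the theorem's constraints. Finally, the cardinality bound $|\mathcal{U}|\le|\mathcal{Y}_1|+2$ follows from a standard Fenchel--Eggleston--Carath\'eodory argument applied to the optimization, preserving $p(y_1)$ together with the two functionals $I(U;Y_1)$ and $H(Z|U)$.
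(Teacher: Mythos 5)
The proposal is correct and takes essentially the same route as the paper: quantize-and-forward for achievability, and a converse built on the auxiliary variable $U_i=(S^n,Y_1^{i-1})$ with the key conditional independence $Z^{i-1}\perp(Z_i,S^n)\,|\,Y_1^{i-1}$ plus timesharing. Your decomposition $I(X^n;Y^n)+I(X^n;S^n|Y^n)=n-H(Z^n|S^n)$ and the two-way chain-rule expansion are algebraically equivalent restatements of the paper's steps (\ref{zzw})--(\ref{zzy}) and Lemma 1, so there is no substantive difference.
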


\subsection{Proof of Achievability}

  Fix the input distribution of $X$ as $Ber(\frac{1}{2})$. The capacity can be achieved by a direct application of Theorem $6$ in \cite{covergamal}, if we identify $U$ with $\hat Y_{1}$. A separate proof is provided here for completeness based on the theory of jointly strongly typical sequences \cite{elem}.

\indent  We transmit at rate $R$ over $B-1$ blocks, each of length
$n$. For the last block no message is transmitted. As $B \to
\infty$, $\frac{R(B-1)}{B} $ becomes arbitrarily close to $R$.

 \textit{Codebook Generation:} Generate $2^{nR}$ independently and identically distributed $n$-sequences, $\mathbf{X}(w),w \in \{1 \ldots 2^{nR}\}$ where each element is generated i.i.d.  $\sim \prod_{i=1}^{n}p(x_{i})$, and $p(x_{i})$ has the $Ber(\frac{1}{2})$ distribution. Fix a $p(u|y_{1})$ such that it satisfies the constraint $I(U;Y_{1})\le R_{0}$. Generate $2^{nI(U;Y_{1})}$ i.i.d $n$-sequences,  $\mathbf{U}(t),t \in \{1 \ldots 2^{nI(U;Y_{1})}\}$ where each element is generated i.i.d. $\sim\prod_{i=1}^{n}p(u_{i})$.

 \textit{Encoding:} We describe the encoding for block $i$. To send message $w_{i},w_{i} \in \{1 \ldots 2^{nR}\}$, the transmitter simply sends $\mathbf {X}(w_{i})$. The relay, having observed the entire corrupted noise sequence from the previous block $\mathbf{Y}_{1,i-1}$, looks in its $\mathbf{U}$ codebook and finds a sequence $\mathbf{U}(t_{i})$ that is jointly strongly typical with  $\mathbf{Y}_{1,i-1}$. It encodes and sends its index $t_{i}$ across the private channel to the destination. Only the relay transmits to the destination in the last block B.

 \textit{Decoding:} The destination, upon decoding $t_{i}$, looks for a $w_{i-1}$ such that $\mathbf{X}(w_{i-1})$ is jointly strongly typical with both $\mathbf{U} (t_{i})$, and $\mathbf{Y}_{i-1}$.

 \textit{Analysis of the Probability of Error:} Because of the symmetry of the code construction we can perform the analysis assuming $\mathbf{X}(1)$ was sent over all the blocks. Since the decodings of different blocks are independent we can focus on the probability of error over the first block, and drop the time indices. The error events are:\\

 \begin{tabular}{l l}
$E_{1}:$& $(\mathbf {X}(1),\mathbf{Y},\mathbf{Y}_{1})$ are not jointly strongly typical.\\
$E_{2}:$& $\not \exists t,\ (\mathbf{U}(t),\mathbf{Y}_{1})$ are jointly strongly typical.\\
$E_{3}:$& $(\mathbf {X}(1),\mathbf{Y},\mathbf{U}(t))$ are not jointly strongly typical.\\
$E_{4}:$& The destination makes an error decoding $t$ in the \\ & next block.\\
$E_{5}:$& $\exists w \not = 1, (\mathbf {X}(w),\mathbf{Y},\mathbf{U}(t))$ are jointly strongly \\ & typical.\\
\end{tabular}
 \\

 For $n$ sufficiently large we have $P(E_{1})< \frac{\epsilon}{5B}$, and $P(E_{2}\cap E_{1}^{c})<
 \frac{\epsilon}{5B}$. By the Markov lemma \cite[Lemma 14.8.1]{elem}, since $(\mathbf{X}(1),\mathbf{Y})-\mathbf{Y}_{1}-\mathbf{U}(t)$
  forms a Markov chain, $P(E_{3}\cap E_{1}^{c}\cap E_{2}^{c})<\frac{\epsilon}{5B}$ for $n$ sufficiently large.
 Since by construction $I(U;Y_{1})\leq R_{0}$, the index $t$ can be sent to the destination with an arbitrarily small
 probability of error so $P(E_{4})<\frac{\epsilon}{5B}$. Finally, the probability that another randomly
  generated $\mathbf{X}(w)$ is jointly strongly typical with both $\mathbf{Y}$ and $\mathbf{U}(t)$ is less than $2^{-n(I(X;Y,U)-\gamma)}$. Using the union
  bound, we have, $P(E_{5}\cap \bigcap_{i=1}^{4} E_{i}^{c})< 2^{nR}2^{-n(I(X;Y,U)-\gamma)}$. Thus, when \begin{align}
  R<I(X;Y,U),\end{align} we have $P(E_{5}\cap \bigcap_{i=1}^{4} E_{i}^{c})<\frac{\epsilon}{5B}$, for sufficiently large $n$.
   Now, since $X$ and $U$ are independent, we have \begin{align} I(X;Y,U)&=I(X;Y|U)\\&=H(Y|U)-H(Z|U)\\&= 1 - H(Z|U),\end{align} where $H(Y|U)=1$, because for binary symmetric channels under the uniform
   input distribution $Ber(\frac{1}{2})$, the output $Y$ is independent of the additive noise $Z$, and hence $U$. Collecting terms we see that $P(\bigcup_{i=1}^5 E_{i})<\frac{\epsilon}{B}$, so
   that using the union bound again we can make the probability of error over all of the $B$ blocks less than $\epsilon$ as long as $R<1-H(Z|U)$.

\subsection{Converse}

The converse will be easy once we prove the following lemma.\\

\newtheorem{lemma}{Lemma}
\begin{lemma}
    Let $Z$,~$V$,~$N$ be independent Bernoulli random variables and let $Y_{1}=Z+V$, $Y=X+Z$, and $S=X_{1}+N$ as shown in Fig. \ref{second}. The following inequality
    holds for any encoding scheme at the relay,
   \begin{align}
      H(Z^{n}|S^{n}) \geq \min_{p(u|y_{1}):I(U; Y_1) \le R_0} \
      nH(Z|U) \label{lemma}
   \end{align}
   where the minimization on the right-hand side may be restricted to $U$'s with $|\mathcal{U}|\leq |\mathcal{Y}_{1}|+2$.\\
\end{lemma}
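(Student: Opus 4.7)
The goal is to single-letterize $H(Z^n\mid S^n)$ by constructing a concrete auxiliary random variable $U$ that is admissible in the minimization on the right-hand side of (\ref{lemma}) and attains $H(Z\mid U)\le \frac{1}{n}H(Z^n\mid S^n)$. Introduce a time-sharing variable $Q$ uniform on $\{1,\ldots,n\}$ and independent of all channel variables, and define
\begin{equation*}
U_i := (S^n,\, Z^{i-1},\, Y_1^{i-1}), \quad U := (U_Q,Q), \quad Y_1 := Y_{1,Q}, \quad Z := Z_Q.
\end{equation*}
Because $Z^{i-1}$, $Y_1^{i-1}$, the ``future'' noise samples, and the relay-destination noise $N^n$ are all independent of the pair $(Z_i,V_i)$ that determines $(Y_{1,i},Z_i)$, one verifies that $U$ is conditionally independent of $Z$ given $Y_1$. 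The joint law therefore factors as $p(y_1)p(z\mid y_1)p(u\mid y_1)$, so $U$ is an admissible test variable for (\ref{lemma}).

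The entropy side is immediate. Using the chain rule $H(Z^n\mid S^n) = \sum_i H(Z_i\mid S^n, Z^{i-1})$ and the fact that conditioning further reduces entropy, $H(Z_i\mid U_i) = H(Z_i\mid S^n, Z^{i-1}, Y_1^{i-1}) \le H(Z_i\mid S^n, Z^{i-1})$. Averaging over $i$ then gives $H(Z\mid U) \le \frac{1}{n}H(Z^n\mid S^n)$.

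The main obstacle is to establish the rate bound $I(U;Y_1)\le R_0$. Since $(Z^{i-1}, Y_1^{i-1})$ is independent of $Y_{1,i}$, one has $I(U_i; Y_{1,i}) = I(S^n; Y_{1,i}\mid Z^{i-1}, Y_1^{i-1})$. Pairing each $Y_{1,i}$ term with its matching $Z_i$ via the chain rule,
\begin{equation*}
\sum_{i=1}^n I(S^n; Y_{1,i}\mid Z^{i-1}, Y_1^{i-1}) \;\le\; \sum_{i=1}^n I(S^n; Y_{1,i}, Z_i\mid Z^{i-1}, Y_1^{i-1}) \;=\; I(S^n; Y_1^n, Z^n).
\end{equation*}
Given $Y_1^n$, the output $S^n = X_1^n + N^n$ depends only on the relay's (possibly randomized) encoding and on the independent noise $N^n$, so $I(S^n; Z^n\mid Y_1^n) = 0$, and hence $I(S^n; Y_1^n, Z^n) = I(S^n; Y_1^n)$. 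Data processing along $Y_1^n\to X_1^n\to S^n$ together with memorylessness of the $X_1\to S$ channel yields $I(S^n; Y_1^n) \le I(X_1^n; S^n) \le nR_0$. Dividing by $n$ gives $I(U;Y_1)\le R_0$.

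Combining the two sides, the constructed $U$ is feasible and achieves $nH(Z\mid U)\le H(Z^n\mid S^n)$, which is exactly the inequality (\ref{lemma}). The cardinality bound $|\mathcal{U}|\le|\mathcal{Y}_1|+2$ follows from the standard Fenchel--Eggleston/support-lemma argument applied to the two functionals $H(Y_1\mid U)$ and $H(Z\mid U)$ while preserving $p(y_1)$.
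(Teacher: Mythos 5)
Your proof is correct and follows essentially the same Wyner--Ziv-style single-letterization as the paper: define $U_i$ from $S^n$ plus past observations, bound the entropy and rate sides separately, and combine with a time-sharing variable $Q$. The only (harmless) difference is that you keep $Z^{i-1}$ inside $U_i$ and compensate on the rate side via $I(S^n;Y_{1i}\mid Z^{i-1},Y_1^{i-1})\le I(S^n;Y_{1i},Z_i\mid Z^{i-1},Y_1^{i-1})$ and the identity $I(S^n;Y_1^n,Z^n)=I(S^n;Y_1^n)$, whereas the paper drops $Z^{i-1}$ from $U_i$ using the Markov chain $Z_i-(S^n,Y_1^{i-1})-Z^{i-1}$ and expands $I(Y_1^n;S^n)$ directly; both routes yield the same feasible $U$ and the same bound.
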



\begin{proof}
The proof of the lemma is closely based on the proof of
\cite[Theorem 14.8.1]{elem}. Fixing an encoding scheme at the relay,
our strategy is to show that there always exists a $U$ for which
$H(Z^{n}|S^{n}) \geq nH(Z|U)$ and $I(Y_{1};U)\leq R_{0} $. This
would allows us to conclude that
\[H(Z^{n}|S^{n}) \geq \min_{p(u|y_{1}):I(U; Y_1) \le R_0} \ nH(Z|U).
\]

We start by  finding a lower bound for $H(Z^{n}|S^{n})$:
\begin{align}
H(Z^{n}|S^{n})&= \sum_{i=1}^{n} H(Z_{i}|S^{n},Z_{1},...,Z_{i-1})\\
        &\geq \sum_{i=1}^{n} H(Z_{i}|S^{n},Z^{i-1},Y_{1}^{i-1})\\
        &= \sum_{i=1}^{n} H(Z_{i}|S^{n},Y_{1}^{i-1})
\end{align}
where in the third line we use the fact that
$Z_{i}-S^{n}Y_{1}^{i-1}-S^{n}Y_{1}^{i-1}Z^{i-1}$ forms a Markov
chain. The Markov chain follows because $Z_{i}$'s are i.i.d.,
$S^{n}$ is only a function of $Y_{1}^{n}$, and $Z_{i}$ can only be
affected by $Z^{i-1}$ through $S^{n}$. Now define
$U_{i}=(S^{n},Y_{1}^{i-1})$, we get:
\begin{align}
        H(Z^{n}|S^{n})\ge \sum_{i=1}^{n} H(Z_{i}|U_{i}).
\end{align}

Next, note that $Z-Y_{1}-X_{1}-S$ forms a Markov chain. As a
result,
\begin{align}
I(X_{1}^{n};S^{n})&\geq I(Y_{1}^{n};S^{n}) \\
        &= \sum_{i=1}^{n} I(Y_{1i};S^{n}|Y_{11},...,Y_{1(i-1)})\\
        &= \sum_{i=1}^{n} I(Y_{1i};S^{n},Y_{1}^{i-1})
\end{align}
where in the third line we use the fact that $Y_{1i}$ is independent of $Y_{1}^{i-1}$ and consequently $I(Y_{1i};Y_{1}^{i-1})=0$. Using our definition of $U$ we get \begin{align}I(X_{1}^{n};S^{n})\geq \sum_{i=1}^{n} I(Y_{1i};U_{i}).\end{align} \\
Recall that $R_{0}=\mathop {\max}\limits_{p(x_{1})} \ I(X_{1};S)$.
Thus, we have shown the following inequalities:
\begin{align}
R_{0}&\geq  \frac{1}{n} \sum_{i=1}^{n} I(Y_{1i};U_{i})\\
      \frac{1}{n}  H(Z^{n}|S^{n})&\geq \frac{1}{n} \sum_{i=1}^{n} H(Z_{i}|U_{i}).
\end{align}
Introducing a standard timesharing random variable $Q$, the above
equations can be rewritten as
\begin{align}
R_{0}&\geq  \frac{1}{n} \sum_{i=1}^{n} I(Y_{1i};U_{i}|Q=i)=I(Y_{1Q};U_{Q}|Q)\\
      \frac{1}{n}  H(Z^{n}|S^{n})&\geq \frac{1}{n} \sum_{i=1}^{n} H(Z_{i}|U_{i},Q=i)= H(Z_{Q}|U_{Q},Q)
\end{align}
Now, since $Q$ is independent of $Y_{1Q}$,  we have
\begin{align}
  I(Y_{1Q};U_{Q}|Q)=I(Y_{1Q};U_{Q},Q)-I(Y_{1Q};Q)=I(Y_{1Q};U_{Q},Q).
\end{align}
Finally, $Y_{1Q}$ and $Z_{Q}$ have the same joint distribution as
$Y_{1}$ and $Z$, so defining $U=(U_{Q},Q)$, $Z=Z_{Q}$ and,
$Y_{1}=Y_{1Q}$, we have shown the existence of a random variable $U$
such that
\begin{align}
R_{0}&\geq I(Y_{1};U)\\
  H(Z^{n}|S^{n})&\geq n H(Z|U)
\end{align}
for any particular encoding scheme at the relay. Since for every
possible encoding scheme at the relay we can construct an i.i.d. $U$
satisfying the above equations, the minimum over all $U$'s
satisfying $I(U;Y)\le R_{0}$ must satisfy (\ref{lemma}). The
cardinality bound is the same as in \cite[Theorem 14.8.1]{elem}.
\end{proof}
\mbox{} \\
\indent The converse can now be proved in a straightforward manner
with:
\begin{align}
   nR &= H(W)   \\[4pt]
      &= I(W;Y^{n},S^{n}) + H(W|Y^{n},S^{n})  \\[4pt]
      &\stackrel{(a)}{\leq}  I(W;Y^{n},S^{n}) + n\epsilon_{n}  \\[4pt]
      &\leq  I(X^{n};Y^{n},S^{n}) + n\epsilon_{n}  \\[4pt]
      &\stackrel{(b)}{=} I(X^{n};Y^{n}|S^{n}) + n\epsilon_{n} \\[4pt]
      &= H(Y^{n}|S^{n}) - H(Y^{n}|S^{n},X^{n}) + n\epsilon_{n} \label{zzw} \\[4pt]
      &\stackrel{(c)}{\leq} n - H(Z^{n}|S^{n},X^{n}) + n\epsilon_{n}  \label{zzy} \\[4pt]
      &=  n - H(Z^{n}|S^{n}) + n\epsilon_{n} \\[4pt]
     &\stackrel{(d)}{\leq} \max_{p(u|y_{1}):I(U; Y_1) \le R_0} \ n(1 - H(Z|U)) + n\epsilon_{n}      \\[4pt]
     &= nC + n\epsilon_{n}
\end{align}
where \vspace{1.5mm}\\
\begin{tabular}{l l}
(a)& follows from Fano's inequality,\\
(b)& follows from the fact that $X^{n}$ is independent of $S^{n}$,\\
(c)& follows from the fact that the maximum entropy\\
 & of a binary random variable of length $n$ is $n$,\\
(d)& follows from Lemma $1$.\vspace{1.5mm}\\
\end{tabular}
Thus, we have shown that for any relaying scheme with a low
probability of error, $R \leq C$.

\subsection{Comments on Theorem 1}

 The capacity of the binary symmetric relay channel considered above is achieved essentially by digitizing the separate channel between the relay and destination. All that matters is that the capacity of the separate channel is sufficiently high to support the relay's description of $U$, the quantization variable. There is no advantage in joint source channel coding at the relay. The input codebook for $X$ is drawn from the uniform $Ber(\frac{1}{2})$ distribution, identical to the capacity achieving distribution if the relay were absent; the source merely increases its rate once the relay is introduced.\\
 \indent There are two conditions which are important for the converse to work. The channel between the source and destination should be additive and modular. These two conditions allow for two crucial simplifications in the converse. First, a uniform input distribution maximizes the output entropy, regardless of any information that the relay may convey about the noise; this was used in (\ref{zzy}). Second, the linear nature of the channel, combined with the expansion in (\ref{zzw}), reduces the role of the relay to essentially source coding with a distortion metric being the conditional entropy of $Z$. This is in contrast to a general relay channel where the relay observes a combination of the source message and noise, so there is an opportunity for the destination to use its received signal to act as side information in the decoding of the relay's quantized message. For the binary symmetric relay channel, the uniform input distribution completely eliminates any aid the destination's output can provide in the decoding of the relay's message; this makes the converse easier to prove.\\

 \subsection{Capacity Can be Below the Cut-set Bound}
To see that the capacity of Theorem 1 can be strictly below the
cut-set bound, consider the case in which $Z^{n}$ has an i.i.d.
$Ber(\frac{1}{2})$ distribution. The capacity can now be evaluated
as \begin{align}C= 1 -
h(h^{-1}(1-R_{0})*\delta),\label{cap2}\end{align} where
$h(p)=-p\log_{2}p - (1-p)\log_{2}(1-p)$, and
$\alpha*\beta=\alpha(1-\beta)+(1-\alpha)\beta$.  This capacity
expression follows by noting that $I(U;Y_{1})=H(Y_{1})-H(Y_{1}|U)$,
so that the constraint in the maximization of Theorem 1 can be
rewritten as \begin{align} H(Y_{1}|U)\geq H(Y_{1})-R_{0}.
\label{zzz}
\end{align} Now we  use Wyner and Ziv's version of the conditional entropy
power inequality for binary random variables \cite{wyner} to claim
that if \begin{align} H(Y_{1}|U) \geq \alpha ,
\label{zzx}\end{align} then
\begin{align} H(Z|U) \geq
h(h^{-1}(\alpha)*\delta), \label{wzin}\end{align} with equality if
$Y_{1}$ given $U$ is a $Ber(h^{-1}(\alpha))$ random variable. Wyner
and Ziv's inequality holds because when $Z$ is $Ber(\frac {1}{2})$
we can write $Z=Y_{1}+V$, where $V$ is $Ber(\delta)$ and $Y_1$ and
$V$ are independent.\\
\indent Now, let $\alpha = H(Y_1)-R_0$. Observe that the $U$ that
achieves equality in (\ref{wzin}), i.e., the $U$ that gives rise to
$Y_1$ given $U$ as $Ber(h^{-1}(H(Y_1)-R_0))$, is precisely the $U$
that minimizes the Hamming distortion of $Y_{1}$ under a rate
constraint $R_0$ in standard rate-distortion theory. This is because
rate-distortion theory states that for binary random variables,
under a rate constraint $R_{0}$, the minimum achievable average
distortion $\nu$ must satisfy $H(\nu)=H(Y_{1}|U)=H(Y_{1})-R_{0}$ and
$Y_1$ given $U$ must be $Ber(\nu)$. Further, as $Y_1$ is
$Ber(\frac{1}{2})$, the distribution of the optimal $U$ is also
$Ber(\frac{1}{2})$. The capacity (\ref{cap2}) follows by using this
$U$ in (\ref{cap1}) and by substituting
$H(Y_{1})=1$ and $\alpha=1-R_0$ in (\ref{wzin}). \\
\indent We now show that the capacity as given in (\ref{cap2}) is
strictly below the cut-set bound. The cut-set bound equals
\cite{covergamal}
\begin{align}
 \max_{p(x,x_{1})}\min \{I(X,X_{1};Y,S),I(X;Y,S,Y_{1}|X_{1})\}. \end{align}
 When $Z$ is $Ber(\frac{1}{2})$, we have
 \begin{align}
 I(X,X_{1};Y,S) &=  H(Y,S) - H(Y,S|X,X_{1})  \\
                &\leq  2 - H(Z,N|X,X_{1})    \label{csb1}\\
                &= 1-H(Z) + 1 - H(N)        \\
                &= R_{0},
 \end{align}
 where the equality in (\ref{csb1}) is achieved by letting $X$ and $X_{1}$
have independent and identical $Ber(\frac{1}{2})$ distributions.

Similarly, for the broadcast bound we have
 \begin{align}
 I(X;Y,S,Y_{1}|X_{1}) &=  I(X;Y|S,Y_{1},X_{1})\\
                &=  H(Y|S,Y_{1},X_{1}) - H(Z|S,Y_{1},X,X_{1}) \\
                &\leq 1 - H(V|S,Y_{1},X,X_{1}) \label{csb2} \\
                &= 1 - H(\delta).
 \end{align}
In the first line, we use the fact that $X$ is independent of
$Y_{1}$ and $S$ given $X_{1}$. In the third line, we again use the
fact that $Y_{1}=Z+V$ and since $Z$ is $Ber(\frac{1}{2})$, so is
$Y_{1}$, thus $Z=Y_{1}+V$, and $Y_1$ and $V$ are independent. The
equality in (\ref{csb2}) is achieved again with $X$ and $X_{1}$ as
independent $Ber(\frac{1}{2})$ distributed random variables. Since
both (\ref{csb1}) and (\ref{csb2}) are achieved with equality with
the same maximizing $p(x,x_1)$, we have shown that the cut-set bound
for this particular channel is equal to
 \begin{align}\min \{R_{0}, 1-H(\delta)\}.\end{align}  The capacity given by (\ref{cap2}) is strictly below the cut-set bound for all values of $R_{0} \geq 1-H(\delta) $.\\


\section{Extension to Modular Relay Channels}

We now extend the capacity results in Section \ref{bin} to include
the general modulo-sum relay channel depicted in Fig. \ref{modch}.
The source and the destination are related by a modulo-sum channel.
The relay observes $Y_{1}$, which is a correlated version of the
noise $Z$ with a conditional distribution $p(y_{1}|z)$. The relay
also has a dedicated channel to the destination with a capacity
\begin{align} R_{0}=\mathop {\max}\limits_{p(x_{1})} \
I(X_{1};S).\label{rnot2}
\end{align}The binary symmetric relay channel considered in Section \ref{bin} is a specific instance of the modulo-sum relay channel. The capacity proof for the binary case can be augmented to give the capacity of the modulo-sum relay channel.\\


\begin{figure}[t]
\begin{displaymath}
\scalebox{0.85}{\xymatrix{
 &Z \ar[r]\ar[d] &*+[F]{\text{$p(y_{1}|z)$}}{+}\ar[r]^{\displaystyle Y_1}&
*+[F]{\txt{Relay}}\ar[r]^{\displaystyle X_{1}}&*+[F]{\txt{$p(s|x_{1})$}}\ar[r]&{S} \\
X\ar[r]&*++[o][F-]{+}\ar[rrrr]&&&&{Y} \save "1,6"."2,6"*[F--]\frm{}
\restore}}
\end {displaymath}
\caption{The modulo-sum Relay Channel}\label{modch}
\end{figure}
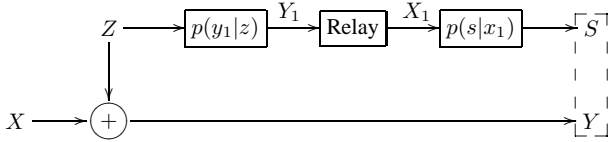


\begin{theorem}
   The capacity of a modular and additive relay channel, in which the relay observes $Y_{1}$, with $p(y_{1}|x,y,z)=p(y_{1}|z)$, and the destination observes $Y=X+Z $ mod $m$ from the source and $S$ from the relay through a separate channel with transition probabilities $p(s|x_{1})$, is
   \begin{align}
      C = \max_{p(u|y_1): I(U; Y_1) \le R_0} m - H(Z|U)
      \label{theorem2}
   \end{align}
   where the maximization may be restricted to $U$'s with $|\mathcal{U}|\leq |\mathcal{Y}_{1}|+2$, and $R_{0}$ is as defined in (\ref{rnot2}).\\

\end{theorem}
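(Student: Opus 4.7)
The plan is to re-run both parts of the Theorem 1 proof essentially verbatim, using only two structural facts of modulo-$m$ addition: when $X$ is uniform on $\{0,1,\dots,m-1\}$ and independent of $Z$, the output $Y=X+Z \pmod m$ is uniform and independent of $Z$; and regardless of $X$'s distribution, $H(Y\mid X,\cdot)=H(Z\mid X,\cdot)$.

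For achievability, I would repeat the quantize-and-forward construction of Section~\ref{bin}, now drawing the $X$-codebook i.i.d.\ uniform on the mod-$m$ alphabet and the $U$-codebook of size $2^{nI(U;Y_1)}$ according to the marginal induced by any $p(u\mid y_1)$ with $I(U;Y_1)\le R_0$. The relay, upon observing $\mathbf{Y}_{1,i-1}$, picks a jointly strongly typical $\mathbf{U}(t_i)$ and forwards the index through the separate channel at rate $R_0$. The error-event analysis $E_1,\dots,E_5$, including the Markov-lemma step for $(\mathbf{X},\mathbf{Y})-\mathbf{Y}_1-\mathbf{U}$, is unchanged. The only calculation that changes is the final rate expression:
\begin{align}
I(X;Y,U) \;=\; I(X;Y\mid U) \;=\; H(Y\mid U)-H(Z\mid U) \;=\; \log m - H(Z\mid U),
\end{align}
because uniform $X$ independent of $U$ makes $Y$ uniform and independent of $U$, and $H(Y\mid X,U)=H(Z\mid X,U)=H(Z\mid U)$.

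For the converse, I would observe that Lemma~1's proof uses only that $Z_i$ are i.i.d., that $S^n$ is a function of $Y_1^n$, and the Markov chain $Z-Y_1-X_1-S$; none of these depend on binariness. Thus defining $U_i=(S^n,Y_1^{i-1})$ and time-sharing with $Q$ gives $H(Z^n\mid S^n)\ge n\min_{p(u\mid y_1):\,I(U;Y_1)\le R_0} H(Z\mid U)$ in exactly the same way. I would then run the Fano-inequality chain (\ref{zzw})--(\ref{zzy}) in the general modulus: reduce to $I(X^n;Y^n\mid S^n)$ using $X^n\perp S^n$, bound $H(Y^n\mid S^n)\le n\log m$ by the maximum entropy of a length-$n$ vector over an alphabet of size $m$, and use modular additivity together with $X^n\perp(Z^n,S^n)$ to rewrite $H(Y^n\mid S^n,X^n)=H(Z^n\mid S^n,X^n)=H(Z^n\mid S^n)$. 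Plugging in the generalized Lemma~1 closes the converse.

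I do not expect a genuine obstacle: the only binary-specific ingredients in Theorem~1 were the bound $H(Y^n\mid S^n)\le n$ (upgraded to $n\log m$) and the uniform-$Ber(1/2)$ input making $Y$ independent of the noise (upgraded to uniform on $\mathbb{Z}_m$). The subtler conceptual point -- that the rate-limited relay link collapses into a single auxiliary $U$ with $I(U;Y_1)\le R_0$, thereby eliminating any Wyner--Ziv-style role for the destination's own observation -- was already established by Lemma~1 and carries over without modification. The cardinality bound $|\mathcal{U}|\le|\mathcal{Y}_1|+2$ follows from the standard support-lemma argument cited there.
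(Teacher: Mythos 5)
Your proposal is correct and follows essentially the same route as the paper, which likewise disposes of Theorem 2 by rerunning the achievability and converse of Theorem 1 with the binary alphabet replaced by $\mathbb{Z}_m$; you simply make explicit the two structural facts (uniform input makes $Y$ uniform and independent of the noise, and modular additivity gives $H(Y^n|S^n,X^n)=H(Z^n|S^n)$) that the paper invokes in one sentence. Your rate expression $\log m - H(Z|U)$ is the correct one; the paper's displayed $m - H(Z|U)$ is evidently a typo for $\log m - H(Z|U)$, consistent with the binary case $m=2$ yielding $1-H(Z|U)$.
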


Achievability follows by applying a simple extension to the
achievability proof of Theorem $1$. The binary symmetric relay
channel converse appropriately modified to reflect the different
alphabet sizes remains valid. This is because all the necessary
conditions for the converse to work are satisfied. The modulo-sum
channel is linear, and the uniform distribution applied at the input
maximizes the output channel entropy regardless of how much is known
about the additive noise, so (\ref{zzy}) holds.

\section{Connection to Ahlswede-Han conjecture}

The Ahslwede-Han \cite{ahlswede} conjecture states that for channels
with rate limited state information to the decoder as shown in Fig.
\ref{third}, the capacity is given by,
      \begin{align}
      C = \max \ I(X;Y|\hat{S'}) \label{eq2}
   \end{align}
   where the maximum is taken over all probability distributions of the form $p(x)p(s')p(y|x,s')p(\hat{s}'|s')$ such that \[I(\hat{S}';S'|Y)\leq R_{0}\] and the auxillary random variable $\hat {S}'$ has cardinality $|\hat{\mathcal{S}'}|\leq    |\mathcal{S'}|+1$.\\

For these channels, the output $Y$ depends stochastically on both
the input $X$ and the particular channel state $S'$. The channel
state is observed at another encoder that has a digital link to the
destination with capacity $R_{0}$. The conjecture claims that the
state variable $S'$ should be quantized at rate $R_{0}$ in such a
way as to maximize the resulting mutual information between $X$ and
$Y$. By identifying $S'$ with $Y_{1}$, and $\hat{S}'$ with $U$, we
observe that the class of relay channels described in Theorem 2 is a
special case of the channel with rate limited state information to
the decoder. We also note that the uniform distribution on $X$
maximizes the capacity and makes $Y$ independent of $S'$, so that
the rates achievable by (\ref{eq2}) and (\ref{theorem2}) are
identical\footnote{Allowing for the difference in cardinality
bounds.}, thus confirming the conjecture for the class of channels
described in this paper.



\begin{figure}[t]
\begin{displaymath}
\scalebox{0.9}{\xymatrix{&S'\ar[r]\ar[d]& *+[F]{\mathrm{Encoder}}
\ar@2{->}[d]^{
\displaystyle R_{0}} \\
X\ar[r]&*+ [F] {p(y|x,s')}\ar[r]&Y }}\\
\end{displaymath}

\caption{Channel with rate limited state information to the decoder}
\label{third}
\end{figure}
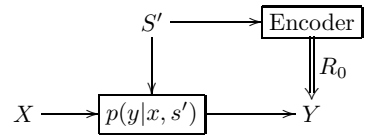

\section{Conclusion}
The capacity of a class of modular additive relay channels was
found. The capacity was shown to be strictly below the cut-set bound
and achievable using a quantize-and-forward scheme where
quantization is performed with a new metric, the conditional entropy
of the noise at the destination. This is the first example of a
relay channel for which the capacity can be strictly below the
cut-set bound. It was proved that there is no advantage to
performing joint source channel coding of the relay's message over
its dedicated link to the destination; digitizing the link is
capacity achieving. The capacity derived here confirms a conjecture
by Ahlswede and Han about the capacity of the rate limited channels
with state information for this class of channels.





\bibliographystyle{IEEEtran}
\bibliography{refs2}

\begin{thebibliography}{1}
\providecommand{\url}[1]{#1}
\csname url@rmstyle\endcsname
\providecommand{\newblock}{\relax}
\providecommand{\bibinfo}[2]{#2}
\providecommand\BIBentrySTDinterwordspacing{\spaceskip=0pt\relax}
\providecommand\BIBentryALTinterwordstretchfactor{4}
\providecommand\BIBentryALTinterwordspacing{\spaceskip=\fontdimen2\font plus
\BIBentryALTinterwordstretchfactor\fontdimen3\font minus
  \fontdimen4\font\relax}
\providecommand\BIBforeignlanguage[2]{{%
\expandafter\ifx\csname l@#1\endcsname\relax
\typeout{** WARNING: IEEEtran.bst: No hyphenation pattern has been}%
\typeout{** loaded for the language `#1'. Using the pattern for}%
\typeout{** the default language instead.}%
\else
\language=\csname l@#1\endcsname
\fi
#2}}

\bibitem{covergamal}
T.~M. Cover and A.~El~Gamal, ``Capacity theorems for the relay channel,''
  \emph{IEEE Trans. Inform. Theory}, vol.~25, no.~5, pp. 572--584, Sept. 1979.

\bibitem{zahedi}
S.~Zahedi, ``On reliable communication over relay channels,'' Ph.D.
  dissertation, Stanford Univ., Stanford, CA, 2005.

\bibitem{aref}
A.~El~Gamal and M.~Aref, ``The capacity of the semideterministic relay
  channel,'' \emph{IEEE Trans. Inform. Theory}, vol.~28, no.~3, p. 536, May
  1982.

\bibitem{coverkim}
T.~M. Cover and Y.-H. Kim, ``Capacity of a class of deterministic relay
  channels,'' in \emph{Proc. IEEE Int. Symp. Information Theory}, June 2007.

\bibitem{zhang}
Z.~Zhang, ``Partial converse for a relay channel,'' \emph{IEEE Trans. Inform.
  Theory}, vol.~34, no.~5, pp. 1106--1110, Sept. 1988.

\bibitem{ahlswede}
R.~Ahlswede and T.~S. Han, ``On souce coding with side information via a
  multiple-access channel and related problems in multi-user information
  theory,'' \emph{IEEE Trans. Inform. Theory}, vol.~29, no.~3, pp. 396--412,
  May 1983.

\bibitem{elem}
T.~M. Cover and J.~S. Thomas, \emph{Elements of Information Theory}.\hskip 1em
  plus 0.5em minus 0.4em\relax New York: Wiley, 1991.

\bibitem{wyner}
A.~D. Wyner and J.~Ziv, ``A theorem on the entropy of certain binary sequences
  and applications: Part {I},'' \emph{IEEE Trans. Inform. Theory}, vol.~19,
  no.~6, pp. 769--772, Nov 1973.

\end{thebibliography}
\end{document}